\title{Optimal Closeness Testing of Discrete Distributions Made \sout{Complex} Simple}
\author[$\dagger$]{Cl\'{e}ment L. Canonne}
\author[$\dagger$]{Yucheng Sun}
\affil[$\dagger$]{University of Sydney}
\begin{document}
\maketitle
\begin{abstract}
    In this note, we revisit the recent work of Diakonikolas, Gouleakis, Kane, Peebles, and Price~\cite{DiakonikolasGKP21}, and provide an alternative proof of their main result. Our argument does not rely on any specific property of Poisson random variables (such as stability and divisibility) nor on any ``clever trick,'' but instead on an identity relating the expectation of the absolute value of any random variable to the integral of its characteristic function:
    \[
        \bEE{|X|} = \frac{2}{\pi}\int_0^\infty \frac{1-\Re(\bEE{e^{i tX}})}{t^2}\, dt
    \]
    Our argument, while not devoid of technical aspects, is arguably conceptually simpler and more general; and we hope this technique can find additional applications in distribution testing.
\end{abstract}
In the \emph{closeness testing} problem, one is given i.i.d.\ samples from two unknown probability distributions $\p,\q$ over a known discrete domain of size $\ab$, without loss of generality $[\ab] \eqdef \{1,2,\dots, \ab\}$; along with distance and error parameters $\dst\in(0,1]$ and $\errprob\in(0,1]$. The goal is to find the minimum number of samples sufficient to distinguish between the two cases (i)~$\p=\q$ and (ii)~$\totalvardist{\p}{\q}>\dst$ and be correct in both cases with probability at least $1-\errprob$ (for all possible inputs $\p,\q$), where
\begin{equation}
    \totalvardist{\p}{\q} = \sup_{S\subseteq [\ab]} \Paren{\p(S)-\q(S)} = \frac{1}{2} \sum_{i=1}^\ab \abs{\p_i - \q_i} \in[0,1]
\end{equation}
denotes the total variation distance (statistical distance). This minimum number of samples, $\ns(\ab,\dst,\errprob)$, is the \emph{sample complexity} of closeness testing; and the optimal dependence on all parameters (including $\errprob$), up to constant factors, was recently obtained by Diakonikolas, Gouleakis, Kane, Peebles, and Price~\cite{DiakonikolasGKP21} (previous work only focused on, and obtained, the right dependence on $\ab,\dst$~\cite{ChanDVV14}).\footnote{For more on closeness (and distribution) testing, the reader is referred to, e.g.,~\cite{Canonne:20} and~\cite[Chapter~11]{Goldreich:17}.}

\begin{theorem}[\cite{DiakonikolasGKP21}]
    \label{theo:optimal:sample:complexity}
The sample complexity of closeness testing is
\begin{equation}
    \ns(\ab,\dst,\errprob) = \bigTheta{ \max\Paren{ \frac{\log(1/\errprob)}{\dst^2}, \frac{\ab^{2/3}\log^{1/3}(1/\errprob)}{\dst^{4/3}},
    \frac{\ab^{1/2}\log^{1/2}(1/\errprob)}{\dst^{2}}}
    }\,.
\end{equation}
\end{theorem}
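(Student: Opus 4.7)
The lower bound in Theorem~\ref{theo:optimal:sample:complexity} was already established in prior work~\cite{ChanDVV14,DiakonikolasGKP21}, so the plan is to reprove the matching upper bound, with the characteristic-function identity from the abstract replacing the Poisson-stability-and-divisibility machinery of~\cite{DiakonikolasGKP21}. First I would Poissonize the sampling, so that the bin counts $X_i\sim\mathrm{Poi}(n\p_i)$ and $Y_i\sim\mathrm{Poi}(n\q_i)$ for $i\in[\ab]$ become mutually independent, and then adopt a Chan--Diakonikolas--Valiant--Valiant-style statistic
\[
    Z = \sum_{i=1}^{\ab} \phi(X_i,Y_i), \qquad \phi(x,y) = \frac{(x-y)^2-(x+y)}{x+y}\mathbf{1}\{x+y\ge 1\},
\]
which is unbiased under $\p=\q$ and whose expectation under the alternative is large enough that thresholding $Z$ yields a tester, provided it concentrates sufficiently tightly around its mean.

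This is where the identity enters. Applied to $S = Z - \bEE{Z}$, and using independence across $i$ to factor $\bEE{e^{itS}} = \prod_{i=1}^{\ab} \bEE{e^{it\Paren{\phi(X_i,Y_i)-\bEE{\phi(X_i,Y_i)}}}}$, the identity reduces $\bEE{|S|}$ to a one-dimensional integral over $t\in(0,\infty)$ of a product of per-coordinate characteristic functions. I would control each factor in two complementary ways: a second-order Taylor expansion at small $t$ gives a sub-Gaussian-type contribution driven by $\mathrm{Var}(\phi(X_i,Y_i))$, while a uniform modulus bound at large $t$ gives a sub-exponential-type contribution driven by the Poisson tails of $X_i,Y_i$. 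Splitting the integral at an optimized cutoff yields a closed-form bound on $\bEE{|S|}$ which, via Markov's inequality, proves Theorem~\ref{theo:optimal:sample:complexity} at constant failure probability.

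The main obstacle is extracting the correct $\errprob$-dependence: naively repeating the constant-probability tester $O(\log(1/\errprob))$ times and taking a majority would overpay by a $\log$ factor on the second and third terms of the $\max$. To avoid this I would apply the same identity not to $|S|$ itself but to the random variable $|S|^r$ (equivalently, to a centered exponential transform of $S$) for $r\asymp\log(1/\errprob)$, and then invoke $\Pr[\abs{S}>t]\le\bEE{|S|^r}/t^r$. Each of the three terms in the sample complexity then arises by optimizing the Taylor/modulus cutoff in the integral separately in three regimes: the first dominates when the sub-Gaussian (variance-driven) contribution wins, the third when the sub-exponential (Poisson-tail) contribution wins, and the second interpolates between them. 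Pushing this interpolation through a single moment bound without losing constants---and verifying that the truncation built into $\phi$ does not blunt the characteristic-function estimates---is where I expect the technical bulk of the proof to concentrate.
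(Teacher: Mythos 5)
Your proposal inverts the roles that the Zolotarev identity plays in the paper, and the choice of statistic undermines the goal. The paper does \emph{not} attack concentration with the identity: it keeps the DGKP21 statistic $Z = \frac{1}{\ns}\sum_i\bigl(|X_i-Y_i| + |X'_i-Y'_i| - |X_i-X'_i| - |Y_i-Y'_i|\bigr)$, which is a sum of absolute values, so the identity $\bEE{|X|} = \frac{2}{\pi}\int_0^\infty t^{-2}\bigl(1-\Re\,\bEE{e^{itX}}\bigr)\,dt$ applies \emph{term by term} to compute the expectation gap. This is exactly where the identity is powerful: each $X_i-Y_i$ has a closed-form characteristic function (a product of two Binomial or Poisson CFs), and the four-term combination collapses to $\int_0^\infty t^{-2}|u(t)-v(t)|^2\,dt$ with $u,v$ the individual CFs. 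The hard part of the argument (as the paper emphasizes) is precisely this expectation gap; concentration then follows in a few lines from McDiarmid's inequality, because the absolute-value statistic has the bounded-difference property (each sample shifts $Z$ by at most $2/\ns$).

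By contrast, you propose to use the CDVV chi-squared statistic $\phi(x,y) = \frac{(x-y)^2-(x+y)}{x+y}\mathbf{1}\{x+y\ge 1\}$, and to apply the identity to the centered statistic $S = Z-\bEE Z$ to extract concentration. Several things break. First, $\phi$ is exactly the statistic that prior work showed does \emph{not} yield the optimal $\log(1/\errprob)$ dependence; DGKP21's contribution was to replace it by the absolute-value statistic whose superior tail behavior is what makes the three-term $\max$ achievable, and your proposal gives no reason to expect $\phi$ suddenly to concentrate tightly enough. Second, $\phi(X_i,Y_i)$ is a nonlinear function of the bin counts, so $\bEE{e^{it\phi(X_i,Y_i)}}$ has no usable closed form, unlike $\bEE{e^{it(X_i-Y_i)}}$ — the ``product of per-coordinate CFs'' you plan to Taylor-expand and bound is not available to you. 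Third, the Zolotarev identity gives only $\bEE{|S|}$ (Pinelis's generalization covers fractional moments $\alpha\in(0,2)$, not the $r\asymp\log(1/\errprob)$-th moments your Markov step needs), so the moment-to-tail conversion you describe has no supporting identity behind it. In short, the ingredient the identity naturally supplies — an exact expression for expectations of absolute values of linear statistics with known CFs — is present in the paper's statistic and absent from yours; reallocating the identity to concentration while discarding that structure removes precisely what makes it work.
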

We here focus exclusively on the upper bound, that is, on the testing algorithm achieving this sample complexity. Suppose we take two sets of $\ns$ i.i.d.\ samples from both $\p$ and $\q$, and for each of those four sets compute the number of occurrences of each of the $\ab$ domain elements among the correspond $\ns$ samples.
 We then consider the (renormalized) statistic introduced in~\cite{DiakonikolasGKP21}:
\begin{equation}
    \label{eq:statistic:z}
    Z \eqdef \frac{1}{\ns}\sum_{i=1}^\ab \mleft( |X_i-Y_i| + |X'_i-Y'_i| - |X_i-X'_i| - |Y_i-Y'_i| \mright)
\end{equation}
where, for each fixed $i$, $X_i,X'_i\sim\binomial{\ns}{\p_i}$ and $Y_i,Y'_i\sim\binomial{\ns}{\q_i}$ are independent (but the $X_i$'s are not independent across different $i$'s). 

In~\cite{DiakonikolasGKP21}, it was shown that the expectation of $Z$ in the cases $\p=\q$ and $\totalvardist{\p}{\q}>\dst$ differed by a noticeable quantity; a comparatively easy argument then allowed them to prove that $Z$ was with high probability close to its expectation; and suitably thresholding this statistic $Z$ to distinguish between the two cases led to the optimal sample complexity.

However, the key part of their argument, which led to establishing this expectation gap between the two cases, was quite unwieldy, and relied on (1)~considering $X_i,X'_i,Y_i,Y'_i$ distributed as Poisson random variables instead of Binomials (i.e., $X_i \sim \poisson{\ns\p_i}$ vs. $X_i\sim \binomial{\ns}{\p_i}$), along with (2)~a clever and intricate use of specific properties of Poisson random variables, such as stability and divisibility. This was then combined with an additional argument establishing that assuming $X_i,X'_i,Y_i,Y'_i$ were Poisson instead of Binomial could be done, in this specific case, without affecting the expectation gap itself (that is, that the gap in expectation between the (analyzed) Poisson and the (true) Binomial cases was smaller that the gap in expectation shown between the $\p=\q$ and $\totalvardist{\p}{\q}>\dst$ cases assuming all random variables were Poisson).

\paragraph{Our contribution.} The goal of this note is to provide an alternative, direct proof of the expectation gap, directly in the usual multinomial setting described above where the random variables are Binomial, and without relying on any particular property of Poisson random variables. To do so, we will rely on the key identity below:
\begin{theorem}[{Zolotarev identity~\cite[Eq. (3.26)]{Pinelis2016}}]
    \label{theo:zolotarev}
For any r.v. $X$, we have
\[
        \bEE{|X|} = \frac{2}{\pi}\int_0^\infty \frac{1-\Re(\bEE{e^{i tX}})}{t^2}\, dt
\]
\end{theorem}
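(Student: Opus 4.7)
The plan is to derive the identity from a deterministic pointwise identity for $|x|$ and then pass to expectations. Specifically, I would first establish that
\[
    |x| \;=\; \frac{2}{\pi}\int_0^\infty \frac{1 - \cos(tx)}{t^2}\, dt \qquad (x \in \mathbb{R}),
\]
and then take expectations on both sides, exchanging expectation and integral via Tonelli's theorem (legal because $(1-\cos(tx))/t^2 \geq 0$). Since $\cos(tX) = \Re(e^{itX})$, linearity of expectation then yields
\[
    \bEE{|X|} = \frac{2}{\pi}\int_0^\infty \frac{1 - \Re(\bEE{e^{itX}})}{t^2}\, dt,
\]
which is the claim. The derivation in fact remains valid when $\bEE{|X|}=\infty$, as both sides then diverge.

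For the pointwise identity, the case $x = 0$ is trivial, and since $1 - \cos(tx)$ depends only on $|x|$ we may assume $x > 0$. The substitution $u = tx$ then reduces matters to the classical evaluation
\[
    \int_0^\infty \frac{1 - \cos u}{u^2}\, du = \frac{\pi}{2}.
\]
To establish this, I would integrate by parts with $w(u) = 1 - \cos u$ and $dv = u^{-2}\, du$, obtaining
\[
    \int_0^\infty \frac{1 - \cos u}{u^2}\, du \;=\; \left[-\frac{1-\cos u}{u}\right]_0^\infty + \int_0^\infty \frac{\sin u}{u}\, du \;=\; \frac{\pi}{2},
\]
where the boundary term at $0$ vanishes because $1 - \cos u = u^2/2 + O(u^4)$, the term at $\infty$ vanishes because $|1-\cos u| \le 2$, and the remaining Dirichlet integral equals $\pi/2$.

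There is no truly hard step here; the only points requiring some care are the convergence of the improper integrals (notably the conditionally convergent Dirichlet integral) and the Tonelli exchange, both of which are entirely standard. Importantly, no property of $X$ beyond it being a real-valued random variable is ever used, which is exactly the feature that makes the identity a natural replacement for the Poisson-specific tools (stability, divisibility) deployed in~\cite{DiakonikolasGKP21}.
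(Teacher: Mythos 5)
Your proof is correct. Note that the paper does not actually prove this theorem: it simply cites it as~\cite[Eq.~(3.26)]{Pinelis2016}, so there is no in-paper argument to compare against. Your route is the standard one: establish the scalar identity $|x| = \frac{2}{\pi}\int_0^\infty (1-\cos(tx))\,t^{-2}\,dt$ by rescaling to the Dirichlet integral $\int_0^\infty u^{-2}(1-\cos u)\,du = \pi/2$, and then integrate against the law of $X$, with Tonelli justifying the interchange since the integrand is non-negative and jointly measurable. Your observation that the identity holds without any integrability hypothesis on $X$ (both sides may equal $+\infty$) is correct and is precisely what the paper alludes to when saying the identity requires no extra conditions. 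The only cosmetic remark: one should say explicitly that $\cos(tX) = \Re(e^{itX})$ and that $\mathbb{E}[\Re(e^{itX})] = \Re(\mathbb{E}[e^{itX}])$ before pulling the real part outside, but that is immediate. Overall this is a complete and clean proof of a result the paper treats as a black box.
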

\noindent which does not require any additional condition on the random variables besides their having a well-defined expectation. To the best of our knowledge, this is the first proof of (the upper bound of)~\cref{theo:optimal:sample:complexity} which works directly in the multinomial setting (``standard sampling setting''), does not rely on \emph{ad hoc}, clever (but non-generalizable) tricks, and provides explicit and relatively small (albeit not optimized) constants.

\paragraph{Organization.} We first provide an outline of the main argument and of the use of the Zolotarev identity in~\cref{sec:outline}, before some (short) preliminaries in~\cref{sec:preliminaries}. We then establish the main lemma, the expectation gap, in~\cref{sec:multinomial} (\cref{lemma:main:gap:binomial}), before briefly recalling how this implies (the upper bound of)~\cref{theo:optimal:sample:complexity} in~\cref{sec:getting:expectation:gap:from:there}. We finally show the versatility of our argument by (re)establishing in~\cref{sec:poissonized} the analogue of~\cref{lemma:main:gap:binomial} in the Poissonized setting, i.e., the statement obtained by other means  in~\cite{DiakonikolasGKP21}.\footnote{The proof of~\cref{lemma:main:gap:poisson} in~\cref{sec:poissonized}  is somewhat simpler than that of~\cref{lemma:main:gap:binomial}, due to some tedious technical details in the latter one; but they are conceptually identical.}

\section{Outline and intuition}
    \label{sec:outline}
The statistic $Z$ defined in~\eqref{eq:statistic:z} was specifically designed so that, when $\p=\q$,
\begin{equation}
    \bE{\p\p}{Z} = 0
\end{equation}
so the crux is to prove that 
\begin{equation}
    \label{eq:exp:gap}
  \bE{\p\q}{Z} = \bigOmega{\min\Paren{\dst, \dst^2 \frac{\ns}{\ab}, \dst^2 \sqrt{\frac{\ns}{\ab}}}} 
\end{equation}
whenever $\totalvardist{\p}{\q} \geq \dst$. Recalling the definition of $Z$, by linearity of expectation it will be enough to analyze each of the $\ab$ summands  separately.
Thus, both in the multinomial (\cref{sec:multinomial}) and Poissonized (\cref{sec:poissonized}) sampling models, the key lemma is to show that
\begin{equation}
    \label{eq:key:goal}
    \bEE{|X-Y| + |X'-Y'| - |X-X'| - |Y-Y'|} 
    \gtrsim \min\Paren{ (\mu-\lambda)^2, |\mu-\lambda|, \frac{(\mu-\lambda)^2}{\sqrt{\mu+\lambda}} }\,,
\end{equation}
where $\mu = \bEE{X} = \bEE{X'}$ and $\lambda = \bEE{Y} = \bEE{Y'}$.

Once this inequality is established, the proof for the expectation gap follows from a relatively straightforward distinction of cases, mimicking the last part of the argument of~\cite{DiakonikolasGKP21} (we recall this argument in~\cref{sec:getting:expectation:gap:from:there}). Our key contribution thus lies in establishing~\cref{eq:key:goal}.\medskip

To do so, we invoke Zolotarev's identity to get rid of the absolute values, allowing us to express (exactly) the LHS as the integral of a real-valued, \emph{non-negative} function over $[0,\infty)$:
\[
    \int_0^\infty \frac{dt}{t^2} f(t)
\]
such that $f(t) = \Theta((\lambda-\mu)^2 t^2)$ as $t\to 0$. Since the integrand is non-negative, we can then hope to lower bound the expression by
\[
    \int_0^\tau \frac{dt}{t^2} f(t)
\]
for some suitable $\tau$ chosen so that the asymptotic approximation $f(t) \asymp (\lambda-\mu)^2 t^2$ holds for all $0\leq t\leq \tau$; which would then give us the lower bound
\[
    \int_0^\infty \frac{dt}{t^2} f(t) \geq \int_0^\tau \frac{dt}{t^2} f(t) \asymp \tau (\lambda-\mu)^2
\]
This is exactly what we do, distinguishing three cases for our chose of $\tau$ as a function of the values of $|\lambda-\mu|$ and $\lambda+\mu$. Namely, the three cases correspond to $\tau \asymp 1$, $\tau \asymp 1/|\lambda-\mu|$, and $\tau \asymp 1/\sqrt{\lambda+\mu}$, giving the three terms of~\cref{eq:key:goal}.

\section{Technical preliminaries}
    \label{sec:preliminaries}
The argument will only require minimal knowledge of discrete probability (namely, the expression of the characteristic function (CF) of a Binomial distribution) as well as some (limited) familiarity with complex numbers. 
We also will rely on the following standard fact:
\begin{fact}
    \label{fact:trigo}
For $0\leq t\leq \frac{\pi}{2}$, we have $\frac{2}{\pi} t \leq \sin t \leq t$; and $\cos$ is decreasing on $[0,\pi]$.
\end{fact}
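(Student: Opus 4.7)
The plan is to prove each of the three elementary assertions by standard calculus, without invoking any prior result in the paper. First, I would handle the upper bound $\sin t \le t$ by introducing $g(t) \eqdef t - \sin t$, noting that $g(0) = 0$ and $g'(t) = 1 - \cos t \ge 0$ for all $t$; hence $g$ is non-decreasing on $[0,\pi/2]$ and therefore non-negative there, yielding $\sin t \le t$.

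For the lower bound $\frac{2}{\pi} t \le \sin t$, my preferred route is via concavity: since $\sin''(t) = -\sin t \le 0$ on $[0,\pi]$, the sine function is concave on $[0,\pi/2]$, so its graph lies above every chord. The chord joining $(0,0)$ and $(\pi/2,1)$ is precisely the line $t \mapsto 2t/\pi$, so concavity gives $\sin t \ge 2t/\pi$ throughout $[0,\pi/2]$. A fully self-contained alternative, avoiding any explicit appeal to concavity, is to set $h(t) \eqdef (\sin t)/t$, compute $h'(t) = (t\cos t - \sin t)/t^2$, and verify that the numerator $\varphi(t) \eqdef t\cos t - \sin t$ is non-positive on $(0,\pi/2]$ (it vanishes at $0$ and satisfies $\varphi'(t) = -t \sin t \le 0$); combined with $h(\pi/2) = 2/\pi$ this again delivers the bound.

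Finally, monotonicity of $\cos$ on $[0,\pi]$ follows immediately from $\cos'(t) = -\sin t$, which is non-positive on $[0,\pi]$ because $\sin$ is non-negative there. I do not anticipate any genuine obstacle: this is at the level of a standard calculus exercise, and the only real judgment call is which of the two equivalent routes (concavity of $\sin$, or direct monotonicity of $t \mapsto (\sin t)/t$) to present for the lower bound, both of which reduce the argument to sign-checking a first derivative.
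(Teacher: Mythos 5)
Your proof is correct; note that the paper simply asserts this as a standard fact (Jordan's inequality plus elementary monotonicity) without supplying any proof, so there is nothing to compare against. Both of your routes for the lower bound (concavity of $\sin$ on $[0,\pi/2]$, or monotonicity of $t \mapsto (\sin t)/t$) are standard and sound, and the rest is routine sign-checking of derivatives.
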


\noindent In terms of notation, we will use $\asymp$ and $\gtrsim$, $\lesssim$ to ignore constants in (in)equalities: i.e., $a_n\gtrsim b_n$ means that there exists an absolute constant $C>0$ such that $a_n \geq b_n$ for all $n$; and $a_n \asymp b_n$ means that both $a_n \gtrsim b_n$ and $a_n \lesssim b_n$ hold.

\begin{remark}
We will for convenience assume throughout that the two unknown distributions have no ``heavy elements,'' i.e., that $\norminf{\p},\norminf{\q} \leq 1/4$. While this may seem restrictive at first sight, this can actually be done without loss of generality by a standard trick, which consists in mapping every element of the domain $[\ab]$ to 4 ``new elements'' $4i-3,4i-2,4i-1,4i$ in a larger domain $[4\ab]$, and (independently) mapping each sample in $[\ab]$ uniformly at random to one of the 4 corresponding elements in $[4\ab]$. This only increases the domain size by a factor $4$,\footnote{One can also use a slightly more involved transformation to avoid paying the resulting constant factor (which is reflected in the sample complexity) by first identifying the (constantly many) elements with probability at least $1/4$ under both $\p$ and $\q$ (by learning the distributions to $\lp[\infty]$ distance $1/8$, e.g., via the Dvoretzky--Kiefer--Wolfowitz inequality); and applying the above mapping to only those few elements.} preserves the total variation distances, and $\ns$ samples from the original distribution $\p$ over $[\ab]$ can be transformed into $\ns$ i.i.d.\ samples from the ``induced'' distribution $\p'$ over $[4\ab]$, which now satisfies $\norminf{\p'} = \norminf{\p}/4 \leq 1/4$.
\end{remark}

\section{Expectation gap in the multinomial setting}
    \label{sec:multinomial}
We start with the result in the ``multinomial'' case, which corresponds to the standard sample setting where exactly $2\ns$ samples are taken from each of $\p$ and $\q$, to obtain $X,X'$ and $Y,Y'$. Recall that we assume throughout $\norminf{\p},\norminf{\q} \leq 1/4$, which will help in some parts of the analysis.
\begin{lemma}
    \label{lemma:main:gap:binomial}
Let $p,q \in [0,1/4]$, and $\ns \geq 16$. Suppose $X,X'\sim\operatorname{Bin}(\ns,p)$ and $Y,Y'\sim\operatorname{Bin}(\ns,q)$ are mutually independent. Then
\begin{align*}
    \bEE{|X-Y| + |X'-Y'| - |X-X'| - |Y-Y'|} 
    &\geq \min\Paren{\frac{1}{8}\ns^2(p-q)^2, \frac{1}{16} \ns|p-q|, \frac{\ns^2(p-q)^2}{40\sqrt{\ns(p+q)}}} \\
    &\asymp \min\Paren{ (\mu-\lambda)^2, |\mu-\lambda|, \frac{(\mu-\lambda)^2}{\sqrt{\mu+\lambda}} }\,,
\end{align*}
where $\mu \eqdef \ns p = \bEE{X}$ and $\lambda\eqdef \ns q = \bEE{Y}$.
\end{lemma}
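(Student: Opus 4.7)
The plan is to invoke Zolotarev's identity (\cref{theo:zolotarev}) on each of the four $\bEE{|\cdot|}$ terms in the LHS and combine. Writing $a\equiv a(t) := 1 + p(e^{it}-1)$ and $b\equiv b(t) := 1 + q(e^{it}-1)$, the Binomial characteristic functions are $a^\ns$ and $b^\ns$; by independence the CFs of $X-Y$ (and $X'-Y'$), $X-X'$, and $Y-Y'$ are $(a\bar b)^\ns$, $|a|^{2\ns}$, and $|b|^{2\ns}$ respectively. Summing the four Zolotarev integrals with appropriate signs, the leading $1$'s cancel and the elementary identity $|z|^2+|w|^2-2\Re(z\bar w)=|z-w|^2$ collapses the combined integrand into a single \emph{non-negative} expression:
\[
    \bEE{|X-Y|+|X'-Y'|-|X-X'|-|Y-Y'|} \;=\; \frac{2}{\pi}\int_0^\infty \frac{|a(t)^\ns - b(t)^\ns|^2}{t^2}\, dt.
\]

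Next, write $a = Ae^{i\alpha}$, $b = Be^{i\beta}$ in polar form; a direct expansion gives
\[
    |a^\ns-b^\ns|^2 \;=\; (A^\ns-B^\ns)^2 + 4A^\ns B^\ns \sin^2\!\Paren{\tfrac{\ns(\alpha-\beta)}{2}},
\]
of which I only retain the second (``phase'') summand, itself non-negative. Elementary calculations give $A^2 = 1-2p(1-p)(1-\cos t)$ (and similarly for $B$), together with $\tan(\alpha-\beta) = (p-q)\sin t/(1-(p+q-2pq)(1-\cos t))$. Combining these with \cref{fact:trigo}, the assumption $p,q\leq 1/4$, and the elementary inequality $(1-x)^\ns \geq 1-\ns x$, one obtains, for $t\in[0,1]$, explicit bounds of the form $A^\ns B^\ns \gtrsim 1$ whenever $\ns(p+q)t^2 \lesssim 1$, and $|\alpha-\beta| \gtrsim |p-q|t$.

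We then lower-bound the integral by its restriction to $[0,\tau]$, choosing $\tau$ so that throughout this range (i)~$A^\ns B^\ns$ stays bounded below by a constant, which requires $\tau \lesssim 1/\sqrt{\ns(p+q)}$; and (ii)~$\ns|\alpha-\beta|/2 \leq \pi/2$, so that \cref{fact:trigo} linearizes the sine, $\sin^2(\ns(\alpha-\beta)/2) \gtrsim \ns^2(\alpha-\beta)^2 \gtrsim \ns^2(p-q)^2 t^2$, which requires $\tau \lesssim 1/(\ns|p-q|)$. Under both constraints, $|a^\ns-b^\ns|^2/t^2 \gtrsim \ns^2(p-q)^2$ on $[0,\tau]$, so the restricted integral is $\gtrsim \ns^2(p-q)^2\,\tau$. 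The three admissible scales $\tau\asymp 1$ (when neither constraint binds), $\tau\asymp 1/(\ns|p-q|)$ (phase-limited), and $\tau\asymp 1/\sqrt{\ns(p+q)}$ (magnitude-limited) yield respectively the three terms $\ns^2(p-q)^2$, $\ns|p-q|$, and $\ns^2(p-q)^2/\sqrt{\ns(p+q)}$ of the lemma; since at least one scale is always admissible, taking the minimum delivers the claim.

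The main obstacle is nailing down explicit, tight constants in the second step: the Taylor-type estimates of $|a|$ and of $\alpha$ via $\arctan$, the sine linearization, and the $(1-x)^\ns$ lower bound must all be tracked with concrete numbers rather than mere asymptotics to land on $1/8$, $1/16$, $1/40$. The hypotheses $p,q\leq 1/4$ and $\ns\geq 16$ will be invoked precisely here, to keep $A, B$ bounded away from $0$ (so that $\arg$ is well-behaved) and to absorb lower-order remainders, ensuring the integrand's lower bound really is a constant multiple of $\ns^2(p-q)^2$ and not just an $o(1)$-corrupted version of it.
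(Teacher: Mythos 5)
Your plan tracks the paper's proof step for step: the same Zolotarev collapse to $\tfrac{2}{\pi}\int_0^\infty |u(t)-v(t)|^2/t^2\,dt$, the same ``drop the magnitude gap, keep the phase gap'' reduction (your decomposition $|a^\ns-b^\ns|^2=(A^\ns-B^\ns)^2+4A^\ns B^\ns\sin^2(\ns(\alpha-\beta)/2)$ followed by discarding the first term is algebraically identical to the paper's AM--GM bound $r^{2\ns}+s^{2\ns}\geq 2r^\ns s^\ns$), the same modulus/argument estimates $A^\ns B^\ns\gtrsim 1$ for $t\lesssim 1/\sqrt{\ns(p+q)}$ and $|\alpha-\beta|\gtrsim |p-q|t$, and the same three truncation scales $\tau\asymp 1$, $1/(\ns|p-q|)$, $1/\sqrt{\ns(p+q)}$ matched to the three cases. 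The quantitative work you explicitly defer is precisely the content of the paper's \cref{claim:technical:inequality} (which establishes the two-sided bound $2\ns|p-q|\sin t<2\ns|\theta-\eta|\leq 14\ns|p-q|\,t$, via an $\arcsin$ rather than your $\arctan$ parameterization of the argument) together with the three displayed case computations leading to $\tfrac18,\tfrac1{16},\tfrac1{40}$.
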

\begin{proof}
Our first step is to use Zolotarev's identity (\cref{theo:zolotarev}) to rewrite the quantity to bound as an integral involving the characteristic functions of $X,Y$, getting rid of the absolute values:
\begin{align}
     \Delta &\eqdef \bEE{|X-Y| + |X'-Y'| - |X-X'| - |Y-Y'|} \notag\\
    &= \frac{2}{\pi}\Re\int_0^\infty \frac{
    \bEE{e^{i t(X-X')}}
    +\bEE{e^{i t(Y-Y')}}
    -2\bEE{e^{i t(X-Y)}}}{t^2}\, dt \notag\\
    &= \frac{2}{\pi}\Re\int_0^\infty \frac{
    \bEE{e^{i tX}}\bEE{e^{-i tX}}
    +\bEE{e^{i tY}}\bEE{e^{-i tY}}
    -2\bEE{e^{i tX}}\bEE{e^{-i tY}}}{t^2}\, dt \notag\\
    &= \frac{2}{\pi}\int_0^\infty \frac{dt}{t^2}\Paren{
    \bEE{e^{i tX}}\bEE{e^{-i tX}}
    +\bEE{e^{i tY}}\bEE{e^{-i tY}}
    -2\Re(\bEE{e^{i tX}}\bEE{e^{-i tY}})} \label{eq:cf}
\end{align}
In particular, letting $u(t) \eqdef \bEE{e^{i tX}}\in\C$, $v(t) \eqdef \bEE{e^{i tY}}\in\C$, we have
\begin{align}
    \Delta 
    &= \frac{2}{\pi}\int_0^\infty \frac{dt}{t^2}\Paren{
    \abs{u(t)}^2+\abs{v(t)}^2
    -2\Re(u(t) \overline{v(t)})}
    = \frac{2}{\pi}\int_0^\infty \frac{dt}{t^2}\underbrace{\abs{u(t)-v(t)}^2}_{\geq 0} \label{eq:integrand:nonneg}
\end{align}
and so $\Delta \geq 0$. Importantly, the fact that the integrand is non-negative means we can choose to restrict the integral to any subset $S\subseteq [0,\infty)$, and still derive a lower bound on $\Delta$.

Recalling the characteristic function of a Binomial distribution, we have $u(t) = (1-p+p e^{i t})^\ns$. 
Writing further $1-p+p e^{i t} = r(t) e^{i \theta(t) }$ and $1-q+q e^{i t} = s(t) e^{i \eta(t) }$
with
\begin{align}
    \label{eq:binomialcase:rhosigma}
    \begin{aligned}
    r(t)  \eqdef \sqrt{(1-p(1-\cos t))^2 + p^2 \sin^2 t} \\
    s(t)  \eqdef \sqrt{(1-q(1-\cos t))^2 + q^2 \sin^2 t}
    \end{aligned}
\end{align}
we get
\begin{align*}
    \Delta 
    &= \frac{2}{\pi}\int_0^\infty \frac{dt}{t^2} \Paren{r^{2\ns}(t)+s^{2\ns}(t)
    -2r(t)^{\ns}s(t)^{\ns}\cos(2\ns\theta(t)-2\ns\eta(t))} \\
    &\geq \frac{4}{\pi}\int_0^\infty \frac{dt}{t^2}  r(t)^{\ns}s(t)^{\ns}\Paren{1-\cos(2\ns(\theta(t)-\eta(t)))}\tag{AM-GM}
\end{align*}

To lower bound this expression when $t$ is small, we need to bound $2\ns(\theta(t)-\eta(t))$ when $0 \leq t \leq \pi$. We do so in the next (slightly technical) claim, whose proof is deferred to the end of the section:
\begin{claim}
    \label{claim:technical:inequality}
For $0\leq t\leq \pi$ and $0\leq p,q\leq 1/4$, we have
\begin{equation}
    2\ns\abs{\theta(t)-\eta(t)} 
    > 2\ns\abs{p-q} \sin t \label{eq:cos:lower_bound}
\end{equation}
and
\begin{equation}
    2\ns\abs{\theta(t)-\eta(t)} 
    \leq 14\ns\abs{p-q} t \label{eq:cos:upper_bound}
\end{equation}
\end{claim}
\noindent We will also use~\cref{fact:trigo} quite extensively to ``replace'' the $\sin t$ of~\cref{eq:cos:lower_bound} by $\frac{2}{\pi}t$ whenever possible.

Where do we go from there? We have three cases, which will determine where to truncate the integral to derive the lower bound on $\Delta$:
\begin{description}
    \item[Case 1:] $\ns(p+q) \leq 1/2$. Then, from~\cref{eq:binomialcase:rhosigma},
    \[
        r(t),\ s(t) \geq 1-\frac{1}{\ns}
    \] 
and so $r(t)^{\ns}s(t)^{\ns} \geq (1-\frac{1}{\ns})^{2\ns} \geq \frac{1}{8}$ (using $\ns \geq 16$). Therefore,
\begin{align*}
    \Delta 
    &\geq \frac{1}{2\pi}\int_0^\infty \frac{dt}{t^2}\cdot  \Paren{1-\cos(2\ns(\theta(t)-\eta(t)))}
\end{align*}
As long as $14 \ns\abs{p-q}\cdot t \leq \pi$ (which is true when $t \leq \frac{\pi}{7}$, given our assumption on $p+q \leq 1/(2\ns)$), by~\cref{eq:cos:lower_bound,fact:trigo} we can write $\cos(2\ns\abs{\theta(t)-\eta(t)}) \leq  \cos\Paren{\frac{4}{\pi} \ns(p-q) t}$. In view of the above, using the fact that the integrand is always non-negative (\cref{eq:integrand:nonneg}), we can further lower bound $\Delta$ as
\begin{align*}
    \Delta 
    &\geq \frac{1}{2\pi}\int_0^{\frac{\pi}{7}} \frac{dt}{t^2}\cdot  \Paren{1-\cos(2\ns(\theta(t)-\eta(t)))}
    \geq \frac{1}{2\pi}\int_0^{\frac{\pi}{7}} \frac{dt}{t^2}\cdot  \Paren{1-\cos\Paren{2 \ns\abs{p-q}\cdot t}} \\
    &= \frac{\ns\abs{p-q}}{\pi} \int_0^{\frac{2 \pi}{7}\ns|p-q|} \frac{du}{u^2}(1-\cos u) 
\end{align*}
Now, since $\frac{2 \pi}{7}\ns|p-q| \leq \frac{\pi}{7}$ and $u\mapsto \frac{du}{u^2}(1-\cos u)$ is decreasing on $[0,2\pi]$, we get
\begin{equation}
    \label{eq:bound:binomial:case1}
    \Delta 
    \geq \frac{\ns\abs{p-q}}{\pi} \cdot \frac{ 2\pi}{7}\ns|p-q| \cdot \frac{1-\cos(\frac{\pi}{7})}{(\frac{\pi}{7})^2} 
    \geq \frac{1}{8}\ns^2(p-q)^2
\end{equation}

    \item[Case 2:] $1/2 < \ns(p+q)$ and $\abs{\ns (p - q)} \geq \sqrt{\ns(p+q)}$. Then we can write, dropping the $\sin^2$ terms in~\cref{eq:binomialcase:rhosigma},
\begin{align*}
    \Delta 
    &\geq \frac{1}{2\pi}\int_0^\infty \frac{dt}{t^2}(1-p(1-\cos t))^\ns(1-q(1-\cos t))^\ns  \Paren{1-\cos(2\ns(\theta(t)-\eta(t)))} \\
    &\geq \frac{1}{2\pi}\int_0^\infty \frac{dt}{t^2}(1-(p+q)(1-\cos t))^\ns  \Paren{1-\cos(2\ns(\theta(t)-\eta(t)))} \\
    &\geq \frac{1}{2\pi}\int_0^\infty \frac{dt}{t^2}(1-\ns(p+q)(1-\cos t))_+  \Paren{1-\cos(2\ns(\theta(t)-\eta(t)))} \tag{$(1-x)^\ns\geq (1-\ns x)_+$}\\
    &\geq \frac{1}{2\pi}\int_0^\infty \frac{dt}{t^2}(1-\ns^2(p-q)^2(1-\cos t))_+  \Paren{1-\cos(2\ns(\theta(t)-\eta(t)))} \tag{Since $p+q\leq \ns(p-q)^2$} \\
    &\geq \frac{1}{2\pi}\int_0^{\frac{\pi}{14\ns|p-q|}} \frac{dt}{t^2}(1-\ns^2(p-q)^2(1-\cos t))  \Paren{1-\cos(2\ns(\theta(t)-\eta(t)))} 
\end{align*}
where for the last second-to-last inequality, we restricted the domain to the interval $[0,\pi /14(\ns|p-q|)]$ and got rid of the $(\cdot)_+$, as on this interval we know that the parenthesis is non-negative (since $1-\cos t \leq t^2/2$). Observe that on this interval we have $14 \ns\abs{(p-q) } t \leq  \pi$, and so again by~\cref{eq:cos:lower_bound,fact:trigo} we get
\begin{align}
    \Delta 
    &\geq \frac{1}{2\pi}\int_0^{\frac{\pi}{14\ns|p-q|}} \frac{dt}{t^2}(1-\ns^2(p-q)^2(1-\cos t))  \Paren{1-\cos(2 \ns{(p-q) t})} \notag\\
    &\geq \frac{1}{2\pi}\int_0^{\frac{\pi}{14\ns|p-q|}} \frac{dt}{t^2}(1-\frac{1}{2}\ns^2(p-q)^2t^2)  \Paren{1-\cos(2 \ns{(p-q) t})} \notag\\
    &= \frac{\ns|p-q|}{2\pi}\int_0^{\frac{\pi}{14}} \frac{du}{u^2}\Paren{1-\frac{u^2}{2}}  \Paren{1-\cos{2u}} \tag{$u \eqdef \ns|p-q|t$}\notag\\
    &\geq \frac{1}{16}\cdot \ns|p-q| \label{eq:bound:binomial:case2}
\end{align}
    \item[Case 3:] $1/2 < \ns(p+q)$ and $\abs{\ns (p - q)} < \sqrt{\ns(p+q)}$. Then we can start as in Case 2, before truncating the integral at $\pi/(14\sqrt{\ns(p+q)})$:
    \begin{align}
    \Delta 
    &\geq \frac{1}{2\pi}\int_0^\infty \frac{dt}{t^2}(1-\ns(p+q)(1-\cos t))_+  \Paren{1-\cos(2\ns(\theta(t)-\eta(t)))} \tag{as in Case 2} \notag\\
    &\geq \frac{1}{2\pi}\int_0^{\frac{\pi}{14\sqrt{\ns(p+q)}}} \frac{dt}{t^2}(1-\ns(p+q)(1-\cos t))_+  \Paren{1-\cos(2\ns(\theta(t)-\eta(t)))} \notag\\
    &\geq \frac{1}{2\pi}\int_0^{\frac{\pi}{14\sqrt{\ns(p+q)}}} \frac{dt}{t^2}(1-\frac{1}{2}\ns(p+q)t^2)  \Paren{1-\cos(2\ns(\theta(t)-\eta(t)))} \tag{$1-\cos t \leq t^2/2$} \notag\\
    &\geq \frac{1}{2\pi}\int_0^{\frac{\pi}{14\sqrt{\ns(p+q)}}} \frac{dt}{t^2}(1-\frac{1}{2}\ns(p+q)t^2)  \Paren{1-\cos\Paren{2 \ns{(p-q) t}}} \tag{similar to Case 2} \notag\\
    &\geq \frac{1}{4\pi}\int_0^{\frac{\pi}{14\sqrt{\ns(p+q)}}} \frac{dt}{t^2} \Paren{1-\cos\Paren{2 \ns{(p-q) t}}} \notag\\
    &= \frac{\ns|p-q|}{4\pi}\int_0^{\frac{\pi \ns|p-q|}{14 \sqrt{\ns(p+q)}}} \frac{du}{u^2} \Paren{1-\cos 2u} \notag\\
    &\geq \frac{\pi \ns^2(p-q)^2}{14 \sqrt{\ns(p+q)}}\cdot \frac{1-\cos 2}{4\pi} \notag\\
    &\geq \frac{\ns^2(p-q)^2}{40\sqrt{\ns(p+q)}} \label{eq:bound:binomial:case3}
\end{align}
using, for the second-to-last inequality, that $\frac{\ns|p-q|}{\sqrt{\ns(p+q)}}<1 $ and that $t\mapsto \frac{1-\cos 2t}{t^2}$ is decreasing on $[0,1]$.
\end{description}
Combining~\cref{eq:bound:binomial:case1,eq:bound:binomial:case2,eq:bound:binomial:case3} concludes the proof.
\end{proof}

To conclude this section, we provide the proof of~\cref{claim:technical:inequality}:
\begin{proof}[{Proof of~\cref{claim:technical:inequality}}]
To give a lower bound on this term, we start from the following observation.
\begin{fact}
    \label{fact:modulus}
    For $p \in [0, \frac{1}{2}]$ and $t \in [0, \pi]$, the quantity $p \sin t$ is non-decreasing in $p$ and non-negative, and $r(t)$ is decreasing in $p$. Thus, we have $\abs{\frac{p\sin t}{r(t)} - \frac{q\sin t}{s(t)}} \geq \abs{\frac{p\sin t - q\sin t}{\max(r(t),s(t))}} > \abs{p\sin t -q\sin t} $, where the second inequality follows from $0 < r(t), s(t) < 1$.
\end{fact}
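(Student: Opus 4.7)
The plan is to verify the two monotonicity assertions by direct computation, and then deduce the displayed inequality chain from them via a symmetry (WLOG) argument. Since $\sin t \geq 0$ on $[0,\pi]$, the first assertion is immediate: $p\sin t \geq 0$, and $\partial_p(p\sin t)=\sin t \geq 0$. For $r(t)$, I would expand the square under the root in \eqref{eq:binomialcase:rhosigma} and use the identity $(1-\cos t)^2 + \sin^2 t = 2(1-\cos t)$ to collapse the expression to the simpler form
\[
    r(t)^2 \;=\; 1 - 2p(1-p)(1-\cos t).
\]
Differentiating in $p$ yields $-2(1-2p)(1-\cos t)$, which is non-positive throughout $p \in [0,\tfrac12]$ and $t \in [0,\pi]$, and strictly negative on the interior. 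Hence $r(t)$ is (non-strictly) decreasing in $p$ on the relevant range, and the same argument applies to $s(t)$ as a function of $q$.

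For the displayed inequality, I would assume WLOG $p \geq q$ (the expression under the absolute value is symmetric under swapping $(p,r)\leftrightarrow (q,s)$). Then by the two monotonicities just established, $p \sin t \geq q \sin t \geq 0$ and $r(t) \leq s(t) = \max(r(t),s(t))$. Replacing $1/r(t)$ on the first term by the no-larger quantity $1/s(t)$ can only decrease the difference, so
\[
    \frac{p\sin t}{r(t)} - \frac{q\sin t}{s(t)} \;\geq\; \frac{p\sin t}{s(t)} - \frac{q\sin t}{s(t)} \;=\; \frac{p\sin t - q\sin t}{\max(r(t),s(t))} \;\geq\; 0.
\]
Because the left-hand side is non-negative, taking absolute values on both sides preserves the inequality, yielding the first half of the chain. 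The remaining strict inequality $\tfrac{|p-q|\sin t}{\max(r,s)} > |p-q|\sin t$ reduces to $\max(r(t),s(t)) < 1$, which follows again from the identity $r(t)^2 = 1 - 2p(1-p)(1-\cos t)$: for $p \in (0,1)$ and $t \in (0,\pi]$ the subtracted term is strictly positive, giving $r(t) < 1$, and symmetrically for $s(t)$.

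I do not foresee a substantive obstacle here: the content is essentially a one-line monotonicity computation followed by an algebraic rearrangement. The only subtlety is the degenerate boundary ($t = 0$, or $p = q$, or both $p$ and $q$ equal to $0$), where both sides of the strict inequality collapse to $0$; but this case plays no role in the downstream use of the fact (namely in establishing \eqref{eq:cos:lower_bound}), where one can restrict to $t \in (0,\pi]$ and the non-trivial situation $p \neq q$.
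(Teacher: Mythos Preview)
Your proposal is correct and follows the same logic the paper sketches: the paper states this as an ``observation'' without an explicit proof, relying on exactly the two monotonicities you verify and the fact $0<r(t),s(t)<1$ for the strict inequality. Your derivation of the simplified form $r(t)^2=1-2p(1-p)(1-\cos t)$ matches the paper's own rewriting (used in the subsequent fact), and you correctly flag the boundary degeneracy (e.g., $q=0$ forces $s(t)=1$, so the \emph{strict} inequality is not literally true there) that the paper glosses over but which is immaterial for the downstream use in~\eqref{eq:cos:lower_bound}.
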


Now, since $\abs{\theta(t)-\eta(t)} = \abs{\arcsin\frac{p\sin(t)}{r(t)}-\arcsin\frac{q\sin(t)}{s(t)}}$ and the derivative of $\arcsin$ is always at least 1, ~\cref{fact:modulus} implies
\[
    2\ns\abs{\theta(t)-\eta(t)} 
    \geq 2\ns\abs{\frac{p\sin t}{r(t)} - \frac{q\sin t}{s(t)}} 
    > 2\ns\abs{p-q} \sin t 
\]
establishing~\eqref{eq:cos:lower_bound}.

To obtain an upper bound on $2\ns(\theta(t)-\eta(t))$, we will rely on the fact below.
\begin{fact}
     $0 \leq \frac{p\sin t}{r(t)}, \frac{q\sin t}{s(t)} \leq \frac{1}{2}$ when $0 \leq t \leq \pi$.
\end{fact}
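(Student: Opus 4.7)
The plan is to handle the two inequalities in the fact separately. The lower bound is immediate: for $p \in [0, 1/4]$ and $t \in [0, \pi]$, the numerator $p\sin t$ is non-negative, and the denominator $r(t)$ is strictly positive, since $r(t) \geq 1 - p(1-\cos t) \geq 1 - 2p \geq 1/2$ using $1 - \cos t \leq 2$. So the actual content is the upper bound $\frac{p \sin t}{r(t)} \leq \frac{1}{2}$, and the argument for $\frac{q \sin t}{s(t)}$ will be verbatim.

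To prove $\frac{p\sin t}{r(t)} \leq \frac{1}{2}$, I would square both sides (both non-negative) to get the equivalent polynomial inequality $4p^2 \sin^2 t \leq r(t)^2 = (1-p(1-\cos t))^2 + p^2\sin^2 t$, which simplifies to $3p^2\sin^2 t \leq (1 - p(1-\cos t))^2$. Since $1 - p(1-\cos t) \geq 1/2 > 0$ on the parameter range (by the same bound as above) and $p\sin t \geq 0$, taking square roots yields the equivalent inequality $\sqrt{3}\,p\sin t \leq 1 - p(1 - \cos t)$, i.e.\ $p\bigl(\sqrt{3}\sin t + 1 - \cos t\bigr) \leq 1$.

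The last step is to bound the bracketed expression uniformly in $t \in [0,\pi]$. Applying the linear-combination identity $\sqrt{3}\sin t - \cos t = 2\sin(t - \pi/6)$, the bracket equals $1 + 2\sin(t - \pi/6)$, which is bounded above by $3$. Hence the left-hand side is at most $3p \leq 3/4 < 1$ for any $p \leq 1/4$, and the upper bound follows with room to spare (the same argument in fact gives the inequality for any $p \leq 1/3$, which is consistent with the fact that $\frac{p\sin t}{r(t)}$ achieves $1/2$ exactly at $p = 1/3$, $t = 2\pi/3$). There is no real obstacle here: the only subtlety is maintaining equivalence across the squaring step, which requires the sign of $1 - p(1 - \cos t)$ to be controlled — exactly what the assumption $p \leq 1/4$ (combined with $1 - \cos t \leq 2$) delivers.
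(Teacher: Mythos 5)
Your proof is correct, and it takes a genuinely different route from the paper's. The paper bounds the numerator and denominator \emph{separately}: it observes $p\sin t \leq 1/4$ (immediate from $p\leq 1/4$), rewrites $r(t)^2 = (1-p(1-\cos t))^2 + p^2\sin^2 t = 1 - 2p(1-p)(1-\cos t)$, and then lower bounds $r(t) \geq \sqrt{1-4p(1-p)} = 1-2p \geq 1/2$, so the ratio is at most $(1/4)/(1/2) = 1/2$. You instead square the target inequality, cancel one $p^2\sin^2 t$, take square roots again (carefully justified by the sign of $1-p(1-\cos t)$), and reduce to the scalar inequality $p\bigl(\sqrt{3}\sin t + 1-\cos t\bigr)\leq 1$, which you resolve via the sinusoidal identity $\sqrt{3}\sin t - \cos t = 2\sin(t-\pi/6)$ giving the uniform bound $3$ on the bracket. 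What each approach buys: the paper's is shorter and more mechanical, requiring nothing beyond a simple algebraic simplification of $r(t)^2$; yours is marginally more work but is an \emph{exact} transformation, which lets you pinpoint the true threshold $p \leq 1/3$ and the equality case $(p,t)=(1/3,2\pi/3)$ — information the paper's numerator/denominator decomposition throws away. (One incidental note: the paper's prose claims it ``only need[s] to show $r(t)\geq \sqrt{2}/2$'' before proving only $r(t)\geq 1/2$; the $1/2$ is what is actually both needed and true here — at $p=1/4$, $t=\pi$ one has $r(\pi)=1/2<\sqrt{2}/2$ — so your proof's constant-tracking is, if anything, cleaner on this point.)
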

\begin{proof}
    Since these two expressions are symmetric, we only need to prove $\frac{p\sin t}{r(t)} \leq \frac{1}{2}$ when $0 \leq t \leq \pi$. Because $p \sin t\leq \frac{1}{4}$ (as we assumed throughout $p\in[0,1/4]$), we only need to show that $r(t) \geq \frac{\sqrt{2}}{2}$. Note that $r(t)$ can be rewritten as $\sqrt{1-2p(1-p)(1-\cos t) }$, which is always at least $\sqrt{1-4p(1-p) }\geq \frac{1}{2}$.
\end{proof}
This will allow us to prove the upper bound,~\eqref{eq:cos:upper_bound}. Indeed, since the $\arcsin$ function is $\frac{2}{\sqrt{3}}$-Lipschitz on $[0,\frac{1}{2}]$ (as its derivative is $1/\sqrt{1-x^2}$), we have
\begin{align*}
    2\ns(\theta(t)-\eta(t)) &= 2\ns \abs{\arcsin\frac{p\sin(t)}{r(t)}-\arcsin \frac{q\sin(t)}{s(t)} } \notag \\
    & \leq \frac{4\ns}{\sqrt{3}} \abs{\frac{p\sin(t)}{r(t)}-\frac{q\sin(t)}{s(t)}} \notag \\
    &\leq \frac{8\ns}{\sqrt{3}} \abs{s(t) p \sin(t) - r(t) q \sin(t) } \tag{Since $r(t) s(t) \geq \frac{1}{2}$} \\
    &= \frac{8\ns}{\sqrt{3}}  \abs{\frac{s^2(t) p^2 - r^2(t) q^2 }{p s(t) + q r(t)}} \sin(t) \notag \\
    &\leq \frac{8\sqrt{2}\ns}{\sqrt{3}(p+q)} \abs{s^2(t) p^2 - r^2(t) q^2}  \sin(t) \tag{Since $r(t), s(t) \geq \frac{\sqrt{2}}{2}$} \\
    &= \frac{8\sqrt{2}\ns}{\sqrt{3}(p+q)} \abs{p^2 - q^2 - 2pq(p - q)(1-\cos t)}  \sin(t) \notag \\
    &\leq \frac{8\sqrt{2}\ns}{\sqrt{3}(p+q)} (\abs{p^2 - q^2} + 2pq\abs{p-q})  \sin(t) \notag \\
    &\leq \frac{16\sqrt{2}\ns}{\sqrt{3}} \abs{p - q}   \sin(t) \tag{$2pq \leq p^2+q^2 \leq p+q$}  \\
    &\leq \frac{16\sqrt{2}\ns}{\sqrt{3}} \abs{p - q} t \tag{By~\cref{fact:trigo}} 
\end{align*}
and $\frac{16\sqrt{2}}{\sqrt{3}} < 14$.
\end{proof}

\section{How to conclude: expectation gap and concentration of $Z$}
    \label{sec:getting:expectation:gap:from:there}
With~\cref{lemma:main:gap:binomial} in hand, we can establish~\cref{eq:exp:gap}, as in~\cite{DiakonikolasGKP21}. Assume $\totalvardist{\p}{\q} > \dst$, and 
define
\[
    S_1 = \mleft\{ i \in [\ab] : \min\Paren{ |\ns\p_i-\ns\q_i|, (\ns\p_i-\ns\q_i)^2, \frac{(\ns\p_i-\ns\q_i)^2}{\sqrt{\ns\p_i+\ns\q_i}} } = |\p_i-\q_i| \mright\}
\]
and similarly for $S_2, S_3$. By the lemma, we have
\begin{align*}
\bE{\p\q}{Z} &\geq \frac{1}{\ns} \sum_{i=1}^\ab \min\Paren{ \frac{1}{8}|\ns\p_i-\ns\q_i|, \frac{1}{16}(\ns\p_i-\ns\q_i)^2, \frac{1}{40}\frac{(\ns\p_i-\ns\q_i)^2}{\sqrt{\ns\p_i+\ns\q_i}} } \\
&= \frac{1}{8}\Paren{ \sum_{i\in S_1}  |\p_i-\q_i| + \frac{1}{2}\ns\sum_{i\in S_2} (\p_i-\q_i)^2
+ \frac{1}{5}\ns^{1/2}\sum_{i\in S_3} \frac{(\p_i-\q_i)^2}{\sqrt{\p_i+\q_i}} } \\
&\geq \frac{1}{8}\Paren{ \sum_{i\in S_1}  |\p_i-\q_i| + \frac{\ns}{2|S_2|}\Paren{\sum_{i\in S_2} |\p_i-\q_i|}^2
+ \frac{\ns^{1/2}}{5} \frac{\Paren{\sum_{i\in S_3} |\p_i-\q_i|}^2}{\sum_{i\in S_3}\sqrt{\p_i+\q_i}} } \\
&\geq \frac{1}{8}\Paren{ \sum_{i\in S_1}  |\p_i-\q_i| + \frac{\ns}{2\ab}\Paren{\sum_{i\in S_2} |\p_i-\q_i|}^2
+ \frac{1}{5}\Paren{\frac{\ns}{2\ab}}^{1/2}\Paren{\sum_{i\in S_3} |\p_i-\q_i|}^2 }
\,,
\end{align*}
where used the inequality
\begin{equation}
\sum_i \frac{a_i^2}{b_i} \geq \frac{\Paren{\sum_i |a_i|}^2}{\sum_i b_i}
\end{equation}
which holds for all $(a_i)_i$, and positive $(b_i)_i$; as well as $\sum_{i\in S_3} \sqrt{\p_i+\q_i} \leq \sqrt{|S_3| \sum_{i\in S_3} (\p_i+\q_i)} \leq \sqrt{2\ab}$ (by Jensen and concavity of $\sqrt{\cdot}$).

Since $\sum_{i=1}^\ab |\p_i-\q_i| > 2\dst$, at least one of $\sum_{i\in S_1} |\p_i-\q_i|,\sum_{i\in S_2} |\p_i-\q_i|,\sum_{i\in S_3} |\p_i-\q_i|$ must be at least $\frac{2}{3}\dst$, from which
\begin{align}
    \label{eq:expectation:gap:final}
    \bE{\p\q}{Z} &\geq \frac{1}{12}\min\Paren{\dst, \frac{\dst^2}{3}\frac{\ns}{\ab}, 
    \frac{\dst^2}{11}\sqrt{\frac{\ns}{\ab}} }
    \gtrsim
    \min\Paren{\dst, \dst^2\frac{\ns}{\ab}, 
    \dst^2\sqrt{\frac{\ns}{\ab}} }
\end{align}
as we wanted.

\paragraph{Concentration.} To prove concentration around the expectation, which is needed to obtain the (tight) sample complexity of closeness testing established in~\cite{DiakonikolasGKP21}, it then suffices to observe (looking at~\eqref{eq:statistic:z}) that changing any of the $4\ns$ samples can only change at most two of the $X_i,Y_i,X'_i,Y'_i$, and thus change (at most) two of the $\ab$ summands by $1/\ns$~--~i.e., change the value of $Z$ by at most $2/\ns$. This is the ``bounded difference property'' one needs to apply McDiarmid's inequality.

Thus, letting $\Delta^\ast$ denote the expectation gap established in~\eqref{eq:expectation:gap:final}, by McDiarmid the probability that $Z$ deviates from its expectation (in either the $\p=\q$ or $\totalvardist{\p}{\q}>\dst$ cases) by more than $\Delta^\ast/3$ is at most
\begin{equation}
    \bPr{ \abs{Z-\bEE{Z}} > \frac{1}{3}\Delta^\ast }
    \leq \exp\Paren{-\frac{2{\Delta^\ast}^2/9}{4\ns \cdot \Paren{2/\ns}^2}}
    = \exp\Paren{-\frac{\ns{\Delta^\ast}^2}{72}}\,.
\end{equation}
For this to be at most $\errprob$, one can verify based on the three regimes for the minimum defining $\Delta^\ast$ (cf.~\eqref{eq:expectation:gap:final}) that it suffices to have
\begin{equation}
    \ns \gtrsim \max\Paren{ \frac{\log(1/\errprob)}{\dst^2}, \frac{\ab^{2/3}\log^{1/3}(1/\errprob)}{\dst^{4/3}},
    \frac{\ab^{1/2}\log^{1/2}(1/\errprob)}{\dst^{2}}}
\end{equation}
which gives the (optimal) sample complexity for closeness testing as a function of all parameters (and where the hidden constants, albeit not optimized, are quite reasonable).

\section{Bonus: what about the Poissonized setting?}
\label{sec:poissonized}
We now show the generalizability of our approach, by establishing the analogue of~\cref{lemma:main:gap:binomial} for Poisson random variables, as considered (due to their proof technique) in~\cite{DiakonikolasGKP21}. Specifically, we show the following:
\begin{lemma}
    \label{lemma:main:gap:poisson}
Let $p,q \in [0,1/4]$, and $\ns \geq 16$. Suppose $X,X'\sim\poisson{\ns p}$ and $Y,Y'\sim\poisson{\ns q}$ are mutually independent. Then
\begin{align*}
    \bEE{|X-Y| + |X'-Y'| - |X-X'| - |Y-Y'|} 
    &\geq \min\Paren{ \frac{1}{20}(\mu-\lambda)^2, \frac{1}{5}|\mu-\lambda|, \frac{1}{7}\frac{(\mu-\lambda)^2}{\sqrt{\mu+\lambda}} }\,,
\end{align*}
where $\mu \eqdef \ns p = \bEE{X}$ and $\lambda\eqdef \ns q = \bEE{Y}$.
\end{lemma}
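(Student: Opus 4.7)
The plan is to follow the same template as the proof of~\cref{lemma:main:gap:binomial}, but exploiting the considerable simplification that the Poisson characteristic function $\bEE{e^{itX}} = e^{\mu(e^{it}-1)}$ (for $X\sim\poisson{\mu}$) already separates cleanly into a real exponential times a complex exponential, without the modulus/phase decomposition forced on us by the Binomial CF. I expect the proof to be conceptually identical to the Binomial one but technically much lighter.

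First, I would apply Zolotarev's identity (\cref{theo:zolotarev}) to $\Delta$ exactly as in~\cref{eq:cf}, then substitute the Poisson CFs. The self-convolution terms collapse to $\bEE{e^{itX}}\bEE{e^{-itX}} = e^{-2\mu(1-\cos t)}$ (and similarly for $Y$), while the cross term satisfies $\bEE{e^{itX}}\bEE{e^{-itY}} = e^{-(\mu+\lambda)(1-\cos t)+i(\mu-\lambda)\sin t}$. Taking real parts yields the exact identity
\[
\Delta = \frac{2}{\pi}\int_0^\infty \frac{dt}{t^2}\Paren{e^{-2\mu(1-\cos t)}+e^{-2\lambda(1-\cos t)} - 2e^{-(\mu+\lambda)(1-\cos t)}\cos((\mu-\lambda)\sin t)}.
\]
Applying AM-GM to the two squared exponentials then produces the clean lower bound
\[
\Delta \geq \frac{4}{\pi}\int_0^\infty \frac{dt}{t^2}\, e^{-(\mu+\lambda)(1-\cos t)}\Paren{1-\cos((\mu-\lambda)\sin t)},
\]
which is manifestly non-negative and, crucially, has $(\mu-\lambda)\sin t$ directly inside the cosine, so no analogue of~\cref{claim:technical:inequality} is needed.

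From here I would split into the same three regimes as in the Binomial proof, each corresponding to a different truncation radius $\tau$: (i) $\mu+\lambda \leq 1/2$, with $\tau$ an absolute constant such as $\pi/2$; (ii) $\mu+\lambda > 1/2$ and $|\mu-\lambda|\geq\sqrt{\mu+\lambda}$, with $\tau\asymp 1/|\mu-\lambda|$; (iii) $\mu+\lambda > 1/2$ and $|\mu-\lambda|<\sqrt{\mu+\lambda}$, with $\tau\asymp 1/\sqrt{\mu+\lambda}$. In each case the choice of $\tau$ ensures that $(\mu+\lambda)(1-\cos t)\leq (\mu+\lambda)t^2/2 \lesssim 1$ on $[0,\tau]$, so the exponential factor is bounded below by an absolute constant; simultaneously the cosine argument $(\mu-\lambda)\sin t$ remains bounded, so~\cref{fact:trigo} combined with the decreasingness of $(1-\cos x)/x^2$ yields $1-\cos((\mu-\lambda)\sin t)\gtrsim (\mu-\lambda)^2 t^2$ uniformly on $[0,\tau]$. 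The integrand then simplifies to $\gtrsim (\mu-\lambda)^2$, and integrating over $[0,\tau]$ produces $\gtrsim \tau(\mu-\lambda)^2$, which matches the three terms of the claimed minimum.

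The main obstacle is purely bookkeeping: tracking constants tightly enough to reach the advertised $\tfrac{1}{20}$, $\tfrac{1}{5}$, $\tfrac{1}{7}$, and checking in each case that the chosen $\tau$ satisfies $\tau\leq \pi/2$ so that both~\cref{fact:trigo} and the small-$x$ estimate $1-\cos x \gtrsim x^2$ apply uniformly on the truncation interval. Conceptually, however, the Poisson case is strictly simpler than the Binomial case: there is no modulus/phase decomposition, no need for the delicate two-sided estimate of~\cref{claim:technical:inequality}, and the hypothesis $p,q\leq 1/4$ plays essentially no role in the heart of the argument.
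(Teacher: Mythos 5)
Your proposal is correct and follows essentially the same route as the paper: apply Zolotarev's identity, substitute the Poisson characteristic functions (which factor cleanly, so no analogue of~\cref{claim:technical:inequality} is needed), apply AM--GM, and truncate the resulting non-negative integrand at a case-dependent $\tau$, with your three regimes matching the paper's up to an immaterial choice of $1/2$ versus $1$ as the threshold on $\mu+\lambda$. The only bookkeeping point worth flagging is that in the middle regime ($|\mu-\lambda|\geq\sqrt{\mu+\lambda}$) the paper keeps the factor $e^{-(\lambda-\mu)^2 t^2/2}$ inside the integral and evaluates $\int_0^1 u^{-2}e^{-u^2/2}(1-\cos(2u/\pi))\,du$ directly, whereas bounding the exponential below by $e^{-1/2}$ and pulling it out, as your sketch suggests, yields a slightly worse constant than the advertised $\tfrac15$ — but the argument's structure is identical and you correctly flag constant-tracking as the remaining work.
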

\begin{proof}
The proof is very similar to that of~\cref{lemma:main:gap:binomial}, and starts in an identical manner up to~\eqref{eq:cf}:
\begin{align*}
     \Delta &\eqdef \bEE{|X-Y| + |X'-Y'| - |X-X'| - |Y-Y'|} \\
    &= \frac{2}{\pi}\int_0^\infty \frac{dt}{t^2}\Paren{
    \bEE{e^{i tX}}\bEE{e^{-i tX}}
    +\bEE{e^{i tY}}\bEE{e^{-i tY}}
    -2\Re(\bEE{e^{i tX}}\bEE{e^{-i tY}})} 
\end{align*}
Using the expression of the CF of a Poisson random variable, along with the fact that
\[
    \Re\Paren{e^{\lambda(e^{i t}-1)}e^{\mu(e^{-i t}-1)}}
    = \Re\Paren{ e^{(\lambda+\mu)(\cos t - 1) + i(\lambda-\mu) \sin t} }
    = e^{(\lambda+\mu)(\cos t - 1)\cdot} \Re\Paren{ e^{i(\lambda-\mu) \sin t} }
\]
we then get
\begin{align}
     \Delta 
&= \frac{2}{\pi}\int_0^\infty \frac{dt}{t^2}\Paren{
    e^{2\lambda(\cos t-1)}
    + e^{2\mu (\cos t-1)}
    -2e^{(\lambda+\mu)(\cos t - 1)}\cos((\lambda-\mu)\sin t) } \notag\\
    &\geq \frac{2}{\pi}\int_0^\infty \frac{dt}{t^2}\Paren{
    2e^{(\lambda+\mu)(\cos t-1)}
    -2e^{(\lambda+\mu)(\cos t - 1)}\cos((\lambda-\mu)\sin t) } \tag{AM--GM}\\
&= \frac{4}{\pi}\int_0^\infty \frac{dt}{t^2} e^{(\lambda+\mu)(\cos t-1)} \Paren{
    1 - \cos((\lambda-\mu)\sin t) } \label{eq:post:Jensen:equality}
\end{align}
Now, as in~\cref{sec:multinomial}, we have three cases to consider.
\begin{itemize}
    \item Suppose $\lambda+\mu \leq 1$. Then, since $(\lambda+\mu)(\cos t - 1) \geq -2$ for all $t$,  by~\eqref{eq:post:Jensen:equality},
\begin{align*}
     \Delta
&\geq \frac{4}{\pi e^2}\int_0^{\frac{\pi}{2}} \frac{dt}{t^2} \Paren{
    1 - \cos((\lambda-\mu)\sin t) }
\geq \frac{4}{\pi e^2}\int_0^{\frac{\pi}{2}} \frac{dt}{t^2} \Paren{
    1 - \cos\Paren{\frac{2}{\pi}(\lambda-\mu) t} }
\end{align*}
the second inequality by~\cref{fact:trigo}. By a change of variable, we get
\begin{align}
     \Delta
&\geq |\lambda-\mu|\cdot \frac{8}{\pi^2 e^2}\int_0^{|\lambda-\mu|} \frac{du(1 - \cos u)}{u^2} 
\geq (\lambda-\mu)^2\cdot \frac{8}{\pi^2 e^2}\cdot (1-\cos 1) \geq \frac{1}{20}(\lambda-\mu)^2
\end{align}
using that $|\lambda-\mu|\leq \lambda+\mu\leq 1$ and monotonicity of $u\mapsto \frac{1-\cos u}{u^2}$ on $[0,1]$ to bound the integrand.
\item Suppose that $\lambda+\mu>1$ and $|\lambda-\mu| \geq \sqrt{\lambda+\mu}$ (and so $|\lambda-\mu| > 1$). Then we can bound~\eqref{eq:post:Jensen:equality} as
\begin{align*}
     \Delta
&\geq \frac{4}{\pi}\int_0^\infty \frac{dt}{t^2} e^{-(\lambda-\mu)^2(1-\cos t)} \Paren{1 - \cos((\lambda-\mu)\sin t) }\\
&\geq \frac{4}{\pi}\int_0^{\frac{1}{|\lambda-\mu|}} \frac{dt}{t^2} e^{-(\lambda-\mu)^2\cdot \frac{t^2}{2}} \Paren{1 - \cos\Paren{\frac{2}{\pi}(\lambda-\mu) t} }
\end{align*}
using again $\sin t \geq \frac{2}{\pi}t$ for $t \leq \frac{\pi}{2}$ (note that this is a fortiori true for $t \leq 1/|\lambda-\mu| < 1$), and $1-\cos t \leq \frac{t^2}{2}$. By another change of variable $u=|\lambda-\mu|t$, we get
\begin{align}
     \Delta
&\geq \frac{4}{\pi} |\lambda-\mu|\int_0^1 \frac{du}{u^2} e^{-\frac{1}{2}u^2} \Paren{1 - \cos\Paren{\frac{2}{\pi}u} } \geq \frac{1}{5}|\lambda-\mu|.
\end{align}

\item Finally, suppose that $\lambda+\mu>1$ and $|\lambda-\mu| < \sqrt{\lambda+\mu}$. Then, since for $t\leq \frac{1}{\sqrt{\lambda+\mu}} < 1$ we again have $\sin t \geq \frac{2}{\pi}t$ and $|\lambda-\mu|t < 1$. From~\eqref{eq:post:Jensen:equality} we can write
\begin{align*}
    \Delta
&\geq \frac{4}{\pi}\int_0^{\frac{1}{\sqrt{\lambda+\mu}}} \frac{dt}{t^2} e^{-(\lambda+\mu)(1-\cos t)} \Paren{1- \cos((\lambda-\mu)\sin t) } \\
&\geq \frac{4}{\pi}\int_0^{\frac{1}{\sqrt{\lambda+\mu}}} \frac{dt}{t^2} e^{-(\lambda+\mu)\frac{t^2}{2}} \Paren{1- \cos\Paren{\frac{2}{\pi}(\lambda-\mu) t} } \\
&\geq \frac{4}{\pi} e^{-\frac{1}{2}}  \int_0^{\frac{1}{\sqrt{\lambda+\mu}}} \frac{dt}{t^2} \Paren{1- \cos\Paren{\frac{2}{\pi}(\lambda-\mu) t} } \\
&= |\lambda-\mu|\cdot  \frac{8}{\pi^2} e^{-\frac{1}{2}}  \int_0^{\frac{2}{\pi}\frac{|\lambda-\mu|}{\sqrt{\lambda+\mu}}} \frac{du\Paren{1- \cos u }}{u^2}  \\
&\geq |\lambda-\mu|\cdot  \frac{4}{\pi} e^{-\frac{1}{2}}  \cdot \frac{|\lambda-\mu|}{\sqrt{\lambda+\mu}} \cdot \Paren{1- \cos \frac{2}{\pi} }\,,
\end{align*}
using, for the last inequality, that $\frac{|\lambda-\mu|}{\sqrt{\lambda+\mu}}<1 $ and that $\frac{1-\cos t}{t^2}$ is decreasing on $[0,1]$.  Since $\frac{4}{\pi e^{1/2}}\Paren{1- \cos \frac{2}{\pi} } > 1/7$, we get that in this case
\begin{equation}
    \Delta \geq \frac{1}{7} \cdot \frac{(\lambda-\mu)^2}{\sqrt{\lambda+\mu}}
\end{equation}
\end{itemize}
This concludes the proof.
\end{proof}
\printbibliography
\end{document}